\newtheorem{prop}{Proposition}
\newtheorem{corollary}{Corollary}
\begin{document}

\title{Daemonic ergotropy in continuously-monitored open quantum batteries}

\author{Daniele Morrone}
\email{daniele.morrone@unimi.it}
\affiliation{Quantum Technology Lab, Dipartimento di Fisica {\it Aldo Pontremoli}, Universit\`{a} degli Studi di Milano, I-20133 Milano, Italy}
\affiliation{Algorithmiq Ltd., Kanavakatu 3C, FI-00160 Helsinki, Finland}

\author{Matteo A. C. Rossi}%
\affiliation{Algorithmiq Ltd., Kanavakatu 3C, FI-00160 Helsinki, Finland}
\affiliation{InstituteQ - the Finnish Quantum Institute, Aalto University, Finland}
\affiliation{QTF Centre of Excellence, Department of Applied Physics, Aalto University, FI-00076 Aalto, Finland}

\author{Marco G. Genoni}
\email{marco.genoni@fisica.unimi.it}
\affiliation{Quantum Technology Lab, Dipartimento di Fisica {\it Aldo Pontremoli}, Universit\`{a} degli Studi di Milano, I-20133 Milano, Italy}

\date{\today}
	
\begin{abstract}
The amount of work that can be extracted from a quantum system can be increased by exploiting the information obtained from a measurement performed on a correlated ancillary system. The concept of daemonic ergotropy has been introduced to properly describe and quantify this work extraction enhancement in the quantum regime. We here explore the application of this idea in the context of continuously-monitored open quantum systems, where information is gained by measuring the environment interacting with the energy-storing quantum device. 
We first show that the corresponding daemonic ergotropy takes values between the ergotropy and the energy of the corresponding unconditional state. The upper bound is achieved by assuming an initial pure state and a perfectly efficient projective measurement on the environment, independently of the kind of measurement performed.
On the other hand, if the measurement is inefficient or the initial state is mixed, the daemonic ergotropy is generally dependent on the measurement strategy. This scenario is investigated via a paradigmatic example of an open quantum battery: a two-level atom driven by a classical field and whose spontaneously emitted photons are continuously monitored via either homodyne, heterodyne, or photo-detection.
\end{abstract}

\maketitle
	
	
\section{Introduction}
The field of quantum thermodynamics aims to extend the laws of classical thermodynamics to the quantum regime~\cite{binderThermodynamicsQuantumRegime2018}. One of its main goals is to understand the limits of work extraction from a quantum system, which has both fundamental and practical implications. Allahverdyan et al. \cite{allahverdyanMaximalWorkExtraction2004} introduced the concept of ergotropy, as the maximum amount of work that can be extracted from a quantum state through unitary dynamics. 

Extracting work from a quantum state is extremely useful in the context of quantum batteries (QBs), energy-storing devices that follow the laws of quantum mechanics during charging and discharging processes.
Research on QBs has then focused on exploring quantum enhancements in the charging process \cite{alickiEntanglementBoostExtractable2013,binderQuantacellPowerfulCharging2015,campaioliEnhancingChargingPower2017,campaioliQuantumBatteriesReview2018,ferraroHighPowerCollectiveCharging2018,andolina_extractable_2019,ZhangPowerfulHarmonicCharging2019,CrescenteChargingEnergyFluctuations2020,JuliaFarreBounds2020,rossiniQuantumAdvantageCharging2020,gyhmQuantumChargingAdvantage2022,seahQuantumSpeedupCollisional2021,salviaQuantumAdvantageCharging2022,landiBatteryChargingCollision2021,mayoCollectiveEffectsQuantum2022,barraEfficiencyFluctuationsQuantum2022,rodriguezOptimalQuantumControl2022,QiMagnonMediatedQuantum2021,CrescenteEnhancingCoherentEnergy2022,mazzonciniOptimalControlMethods2023,rodriguezAIdiscoveryNewCharging2023} and various implementations have been proposed, including single qubits, collective spins, and quantum harmonic oscillators \cite{campaioliQuantumBatteriesReview2018,andolinaChargermediatedEnergyTransfer2018,rosaUltrastableChargingFastscrambling2020,rossiniQuantumAdvantageCharging2020,shaghaghiMicromasersQuantumBatteries2022}. Recent experiments have shown promising results in realizing a quantum-enhanced QB \cite{huOptimalChargingSuperconducting2022,quachSuperabsorptionOrganicMicrocavity2022,joshiExperimentalInvestigationQuantum2022,wennigerExperimentalAnalysisEnergy2023}. 

The analysis of open quantum batteries (OQBs), that is QBs in an open quantum system scenario, is crucial to discuss and guarantee their real-world implementation.
Research on OQBs has focused so far on studying the effect of different environmental models~\cite{farinaChargermediatedEnergyTransfer2019,zakavatiBoundsChargingPower2021,carregaDissipativeDynamicsOpen2020,kaminNonMarkovianEffectsCharging2020,morroneChargingQuantumBattery2023} and devising quantum control strategies to counteract the impact of noise~\cite{santosStableAdiabaticQuantum2019,quachUsingDarkStates2020,gherardiniStabilizingOpenQuantum2020,mitchisonChargingQuantumBattery2021,rodriguezOptimalQuantumControl2022,yaoOptimalChargingOpen2022a,koshiharaQuantumErgotropyQuantum2023}. 
We here want to address the situation where some information leaking to the environment can be recovered via continuous measurements~\cite{wisemanQuantumMeasurementControl2009,jacobsQuantumMeasurementTheory2014}, and then exploited for enhancing the work extraction protocol (see Fig.~\ref{f:OQB}). Since the paradigmatic example of Maxwell's demon, it is well known that acquiring information via a measurement effectively brings a system out of equilibrium and allows to extract useful work via conditional operations~\cite{maruyamaColloquiumPhysicsMaxwell2009}. This idea was then brought to the quantum realm, also highlighting the relationship between extractable work and quantum correlations~\cite{lloydQuantummechanicalMaxwellDemon1997,oppenheimThermodynamicalApproachQuantifying2002,zurekQuantumDiscordMaxwell2003,maruyamaThermodynamicalDetectionEntanglement2005,kimQuantumSzilardEngine2011,perarnau-llobetExtractableWorkCorrelations2015,brownPassivityPracticalWork2016,francicaDaemonicErgotropyEnhanced2017,ciampiniExperimentalExtractableWorkbased2017,brunelliDetectingGaussianEntanglement2017,bernardsDaemonicErgotropyGeneralised2019,safranekWorkExtractionUnknown2023}. Francica et al. introduced the concept of {\it daemonic ergotropy}~\cite{francicaDaemonicErgotropyEnhanced2017}, as the maximum average work that can be extracted from a quantum system via unitary operations by exploiting the information obtained by measuring a correlated ancilla.
 \begin{figure}[t!]
	\centerline{\includegraphics[width=\columnwidth]{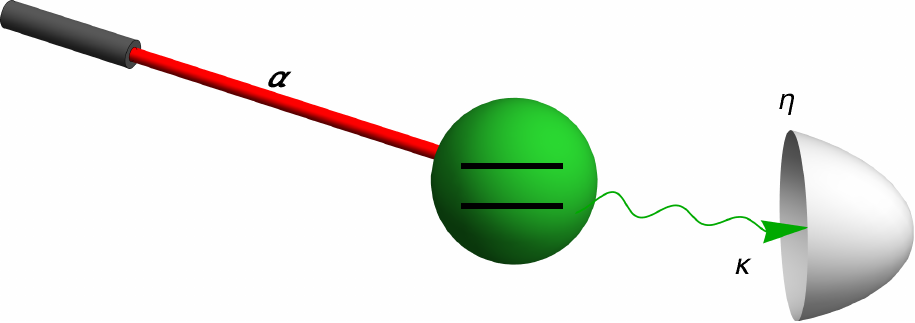}}
	\caption{Pictorial representation of a continuously-monitored quantum battery: a coherent drive supplies energy with intensity $\alpha$ to a two-level atom (green) which is spontaneously emitting into its environment with rate $\kappa$. The emitted field is continuously monitored via a detector with efficiency $\eta$.}
    \label{f:OQB}
 \end{figure}

Continuously-monitored open quantum systems~\cite{wisemanQuantumMeasurementControl2009,jacobsQuantumMeasurementTheory2014}  have been extensively studied from a theoretical point of view, mainly for feedback-assisted quantum state engineering protocols~\cite{wisemanQuantumTheoryOptical1993,wisemanQuantumTheoryContinuous1994,dohertyFeedbackControlQuantum1999,thomsenSpinSqueezingQuantum2002,serafiniDeterminationMaximalGaussian2010,szorkovszkyMechanicalSqueezingParametric2011,genoniOptimalFeedbackControl2013,genoniQuantumCoolingSqueezing2015,hoferEntanglementenhancedTimecontinuousQuantum2015,martinDeterministicGenerationRemote2015,brunelliConditionalDynamicsOptomechanical2019,martinQuantumFeedbackMeasurement2020,jiangOptimalityFeedbackControl2020,zhangLocallyOptimalMeasurementbased2020,digiovanniUnconditionalMechanicalSqueezing2021,candeloroFeedbackAssistedQuantumSearch2023,isaksenMechanicalCoolingSqueezing2023} and for quantum estimation purposes~\cite{mabuchiDynamicalIdentificationOpen1996,gambettaStateDynamicalParameter2001,geremiaQuantumKalmanFiltering2003,gutaOptimalEstimationQubit2008,tsangOptimalWaveformEstimation2010,tsangFundamentalQuantumLimit2011,tsangQuantumMetrologyOpen2013,gammelmarkBayesianParameterInference2013,gammelmarkFisherInformationQuantum2014,sixParameterEstimationMeasurements2015,catanaFisherInformationsLocal2015,kiilerichBayesianParameterEstimation2016,ralphMultiparameterEstimationQuantum2017,genoniCramErRaoBound2017,albarelliUltimateLimitsQuantum2017,rossiExperimentalAssessmentEntropy2020,iliasCriticalityEnhancedQuantumSensing2022,yangEfficientInformationRetrieval2022}; more recently their quantum thermodynamics properties have been also analyzed, being a paradigmatic example of out-of-equilibrium quantum systems~\cite{elouardWorkHeatEntropy2018,manzanoQuantumThermodynamicsContinuous2022,garrahanThermodynamicsQuantumJump2010,garrahanQuantumTrajectoryPhase2011,garrahanCatchingReversingQuantum2018,cilluffoQuantumJumpStatistics2019,belenchiaEntropyProductionContinuously2020,rossiExperimentalAssessmentEntropy2020,landiIrreversibleEntropyProduction2021,landiInformationalSteadyStates2022,bhandariContinuousMeasurementBoosted2022,yanikThermodynamicsQuantumMeasurement2022,bhandariMeasurementBasedQuantumThermal2023}. Remarkably, the possibility of observing single trajectories has been now experimentally shown in different platforms, such as superconducting circuits~\cite{murchObservingSingleQuantum2013,campagne-ibarcqObservingQuantumState2016,ficheuxDynamicsQubitSimultaneously2018,minevCatchReverseQuantum2019}, optomechanical~\cite{wieczorekOptimalStateEstimation2015,rossiObservingVerifyingQuantum2019} and hybrid~\cite{thomasEntanglementDistantMacroscopic2021} systems. Recently, feedback protocols able to cool mechanical oscillators have also been demonstrated~\cite{rossiMeasurementbasedQuantumControl2018,magriniRealtimeOptimalQuantum2021,tebbenjohannsQuantumControlNanoparticle2021}.

From a fundamental point of view, during its evolution, an open quantum system is correlated to the environment; for this reason, being able to measure the environment makes this scenario the ideal playground where to investigate the properties of the daemonic ergotropy both for its fundamental aspects, but also for its possible practical implementation in OQBs. 

In this paper we derive the general properties of the daemonic ergotropy in open quantum systems and we apply them to the simplest example of an OQB, that is a two-level atom driven via a classical field and spontaneously emitting photons into its electromagnetic environment. We show that the daemonic ergotropy of a continuously-monitored system surpasses the ergotropy of the unconditional state and, with perfectly efficient measurements, can even reach the energy of the unconditional state. We then discuss the performance of different types of measurements with non-unit efficiency.

The manuscript is organized as follows: in Sec. \ref{s:DE} we review the concept of daemonic ergotropy and present our first results. In Sec. \ref{s:oqs} we extend the concept of daemonic ergotropy to continuously-monitored open quantum systems, along with some general results that apply in this scenario. In Sec. \ref{s:tls} we discuss these in a minimal example of an open quantum battery: a two level system driven by a classical field and whose spontaneous emitted photons are continuously monitored. Finally, in Sec. \ref{s:conclusion}, we give our conclusions and propose further outlooks.

\section{Daemonic ergotropy}
\label{s:DE}
We start by recalling the definition and main properties of the ergotropy of a quantum state. Let us consider a quantum system described by a Hamiltonian $\hat{H}_0$, where, without losing generality, we fix its smallest eigenvalue equal to zero. We define the ergotropy $\mathcal{E}(\varrho)$ of a given quantum state $\varrho$ describing such quantum system as the maximum amount of work that can be extracted via unitary dynamics~\cite{allahverdyanMaximalWorkExtraction2004}:
\begin{align}
    \mathcal{E}(\varrho) = \max_{\hat{U}} \left[ E(\varrho) -  E(\hat{U} \varrho \hat{U}^\dag)\right] \,,
\end{align}
where $E(\varrho)= \Tr[\hat{H}_0\varrho]$ denotes the average energy of the quantum state $\varrho$. A closed formula for $\mathcal{E}(\varrho)$ in terms of eigenvalues and eigenvectors of $\varrho$ and $\hat{H}_0$ can be straightforwardly derived and the following main properties can be demonstrated: i) the ergotropy is upper bounded by the energy, and this upper bound is saturated for pure quantum states, i.e $\mathcal{E}(|\psi\rangle\langle\psi|)=E(|\psi\rangle\langle\psi|)$; ii) ergotropy is equal to zero $\mathcal{E}(\varrho)=0$ if and only if $\varrho$ is a {\it passive state}, i.e. it is diagonal in the Hamiltonian $\hat{H}_0$ eigenbasis, and its eigenvalues do not admit energy inversion~\cite{allahverdyanMaximalWorkExtraction2004}. 

Let us now consider a bipartite quantum state $\varrho^{\tiny SA}$, where the quantum system $S$ represents our energy storing device, while $A$ is an ancillary system. If the ancilla is discarded, the maximum amount of extractable work is simply equal to $\mathcal{E}(\varrho^{\tiny S})$, with $\varrho^{\tiny S} = \Tr_A[\varrho^{\tiny SA}]$. We now assume that a measurement is performed on the ancilla, described by a positive operator-valued measurement (POVM) $\{\hat{\Pi}_a^A\}$. We can thus define the {\it daemonic ergotropy} as the average ergotropy of the corresponding conditional states~\cite{francicaDaemonicErgotropyEnhanced2017}
\begin{align}
\overline{\mathcal{E}}_{\{\hat{\Pi}_a^A\}} = \sum_a p_a \mathcal{E}(\varrho_a^{\tiny S})\,,
\end{align}
where
\begin{align}
    p_a &= \Tr_{SA}[\varrho^{\tiny SA} (\hat{\mathbbm{1}}^S \otimes \hat{\Pi}_a^A)]\\
    \varrho_a^{\tiny S} &= \Tr_A[\varrho^{\tiny SA} (\hat{\mathbbm{1}}^S \otimes \hat{\Pi}_a^A)]/p_a
\end{align}

denote respectively the probability and the conditional state corresponding to the measurement outcome $a$. The ergotropy is a convex quantity in the quantum state $\varrho$~\cite{francicaDaemonicErgotropyEnhanced2017,bernardsDaemonicErgotropyGeneralised2019}; as a consequence, since $\varrho^{\tiny S}=\sum_a p_a \varrho_a^{\tiny S}$, one obtains that
\begin{align}
\overline{\mathcal{E}}_{\{\hat{\Pi}_a^A\}}  \geq \mathcal{E}(\varrho^{\tiny S}) \,, \label{eq:daemonconvex}
\end{align}
that is the average work that may be extracted is increased thanks to the information obtained from the ancilla. Furthermore, the daemonic ergotropy can be rewritten as
\begin{align}
\overline{\mathcal{E}}_{\{\hat{\Pi}_a^A\}} = E(\varrho^{\tiny S}) - \sum_a p_a \min_{\hat{U}_a} E(\hat{U}_a \varrho_a^{\tiny S} \hat{U}_a^\dag)\,.
\label{eq:daemontwo}
\end{align}
From this formula it is then clear that, in order to achieve the daemonic enhancement, one needs to implement a conditional unitary evolution $\hat{U}_a$ that depends on the conditional state $\varrho_a^{\tiny S}$. Furthermore, $\overline{\mathcal{E}}_{\{\hat{\Pi}_a^A\}}$ generally depends on the specific POVM $\{\hat{\Pi}_a^A\}$ implemented (see~\cite{bernardsDaemonicErgotropyGeneralised2019} for a recipe to obtain the optimal POVM  maximizing $\overline{\mathcal{E}}_{\{\hat{\Pi}_a^A\}}$). Before addressing the scenario of continuosly-monitored quantum systems, we here present the first result of our work via the following proposition.

\begin{prop}\label{prop1}
Given a bipartite system system+ancilla prepared in an initial pure state $|\Psi\rangle^{\tiny SA}$, and assuming a projective (rank-one) measurement on the ancilla $\{\Pi_a^{A}=|\phi_a\rangle\langle \phi_a|\}$, then 
\begin{align}
\overline{\mathcal{E}}_{\{\hat{\Pi}_a^A\}} =E(\varrho^{\tiny S}) \,, 
\end{align}
that is, the daemonic ergotropy is equal to the energy of the reduced state of the system $\varrho^{\tiny S} = \Tr_A[\varrho^{\tiny SA}]$, independently of the measurement performed.
\end{prop}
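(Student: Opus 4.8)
The plan is to exploit the fact that a rank-one projective measurement on the ancilla of a \emph{pure} bipartite state leaves the conditional system states pure, and then to combine the pure-state property of the ergotropy with the linearity of the energy functional.

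First I would show that each conditional state $\varrho_a^{\tiny S}$ is pure. Writing the (unnormalized) post-measurement state as $(\hat{\mathbbm{1}}^S \otimes |\phi_a\rangle\langle\phi_a|)|\Psi\rangle^{\tiny SA} = |\psi_a\rangle^{\tiny S}\otimes|\phi_a\rangle^A$, where $|\psi_a\rangle^{\tiny S} = {}^A\langle\phi_a|\Psi\rangle^{\tiny SA}$ is an (unnormalized) vector on $S$, one sees that conditioning on outcome $a$ yields $p_a = {}^S\langle\psi_a|\psi_a\rangle^{\tiny S}$ and $\varrho_a^{\tiny S} = |\psi_a\rangle\langle\psi_a|^{\tiny S}/p_a$, a pure state. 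This is the only nontrivial structural input, and I expect it to be the main (though modest) obstacle: it is where purity of $|\Psi\rangle^{\tiny SA}$ and the rank-one nature of the POVM elements are both used in an essential way; dropping either assumption destroys the argument.

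Next I would invoke property (i) of the ergotropy recalled above, namely $\mathcal{E}(|\psi\rangle\langle\psi|) = E(|\psi\rangle\langle\psi|)$ for any pure state, to write $\mathcal{E}(\varrho_a^{\tiny S}) = E(\varrho_a^{\tiny S})$ for every outcome $a$. Substituting into the definition of the daemonic ergotropy gives $\overline{\mathcal{E}}_{\{\hat{\Pi}_a^A\}} = \sum_a p_a E(\varrho_a^{\tiny S})$.

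Finally, since $E(\varrho) = \Tr[\hat{H}_0 \varrho]$ is linear in $\varrho$ and $\varrho^{\tiny S} = \sum_a p_a \varrho_a^{\tiny S}$, I would pull the sum inside the trace to conclude $\sum_a p_a E(\varrho_a^{\tiny S}) = E(\sum_a p_a \varrho_a^{\tiny S}) = E(\varrho^{\tiny S})$, which is the claim. Note that no property of the specific measurement basis $\{|\phi_a\rangle\}$ enters beyond rank-one projectivity, so the result is manifestly independent of which such measurement is performed; combined with the convexity bound \eqref{eq:daemonconvex} and property (i) ($\mathcal{E}(\varrho^{\tiny S})\le E(\varrho^{\tiny S})$), this also shows that this strategy is optimal among all POVMs on the ancilla.
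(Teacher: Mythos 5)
Your proposal is correct and follows essentially the same route as the paper's proof: establish purity of the conditional states $\varrho_a^{\tiny S}$ from the rank-one projection on the pure bipartite state, apply $\mathcal{E}=E$ for pure states, and use linearity of the energy together with $\varrho^{\tiny S}=\sum_a p_a \varrho_a^{\tiny S}$. The closing remark on optimality is a harmless (and correct) addition beyond what the paper states.
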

\begin{proof}
In order to prove this theorem, we first observe that the conditional state remains pure for any measurement outcome, i.e. $\varrho_a^{\tiny S}=|\xi_a\rangle\langle \xi_a|$, with
\begin{align}
    |\xi_a\rangle=\frac{1}{\sqrt{p_a}} \langle \phi_a | \Psi\rangle^{\tiny SA}
\end{align}
As a consequence one has 
\begin{align}
\overline{\mathcal{E}}_{\{\hat{\Pi}_a^A\}} &= \sum_a p_a \mathcal{E}(|\xi_a\rangle\langle \xi_a|) \nonumber \\
&= \sum_a p_a E(|\xi_a\rangle\langle \xi_a|) = E(\varrho^{\tiny S}) \,, 
\end{align}
where we have exploited: i) the fact that the ergotropy of pure states is equal to their energy, ii) the linearity of the trace, and iii) the relationship $\varrho^{\tiny S} =  \sum_a p_a |\xi_a\rangle\langle \xi_a|$.
\end{proof}

\section{Daemonic ergotropy in continuously-monitored open quantum systems} \label{s:oqs} We start by briefly introducing the basic notions on continuously-monitored quantum systems and quantum trajectories~\cite{wisemanQuantumMeasurementControl2009,jacobsQuantumMeasurementTheory2014}. We assume that our quantum system is interacting with a Markovian environment such that its unconditional evolution is described by a master equation in Lindblad form
\begin{align}
    d\varrho_{\sf unc}(t)/dt = -i [\hat{H}_s(t),\varrho_{\sf unc}(t)] + \mathcal{D}[\hat{c}]\varrho_{\sf unc}(t) \,,  
    \label{eq:masterequation}
\end{align}
where $\hat{H}_s(t)$ denotes the Hamiltonian ruling the evolution of the system and $\mathcal{D}[\hat{c}]\varrho= \hat{c}\varrho \hat{c} - (\hat{c}^\dag \hat{c} \varrho + \varrho\hat{c}^\dag \hat{c})/2$ is the Lindbladian superoperator~\footnote{We are here considering a single jump operator $\hat{c}$ describing the interaction between the system and a zero-temperature environment, but the formalism can be straightforwardly generalized to multiple jump operators and to generic thermal environments}.

If one assumes that the environment is continuously monitored, the system evolution will be described by a stochastic master equation (SME) for the conditional state $\varrho_{\sf c}(t)$, which is typically referred to as a quantum trajectory. In general it is always true that
\begin{align}
    \varrho_{\sf unc}(t) = \sum_{\sf traj} p_{\sf traj} \, \varrho_{\sf c}(t), \label{eq:convex}
\end{align}

where $p_{\sf traj}$ denotes the probability of each quantum trajectory $\varrho_{\sf c}(t)$. Different measurement strategies will correspond to different unravelling of the unconditional master equation (\ref{eq:masterequation}), that is to different SMEs for the conditional state $\varrho_{\sf c}(t)$ and, mathematically speaking, to different convex combinations for the unconditional state $\varrho_{\sf unc}(t)$. We will later consider the most paradigmatic examples of such unravellings, corresponding to the scenarios where the environment is continuously-monitored via either photo-detection (PD), homodyne-detection (HoD) or heterodyne-detection (HeD). 
The explicit formulas for the corresponding SMEs can be found in appendix \ref{a:monitoring}. Remarkably, while we here focus on Markovian open quantum systems described by a Lindblad master equation (\ref{eq:masterequation}), all the results that follow apply whenever one can write the unconditional state as a mixture of trajectories as of (\ref{eq:convex}), including monitored quantum systems exhibiting non-Markovian behaviour.

Besides the kind of detection performed on the environment, such unravellings are characterized by their measurement efficiency $\eta$, that comprehensively quantifies the portion of the environment that is accessible and the efficiency of the detector. In particular, we recall that for $\eta=1$, that is when one assumes that the environment is fully accessible and measurable via a projective (rank-one) measurement, and for an initial pure state for the system, one can prove that the evolution can be described via a stochastic Schrödinger equation, and the conditional state remains pure during the whole dynamics~\cite{wisemanQuantumMeasurementControl2009,jacobsQuantumMeasurementTheory2014}. We can now present one of the main results via the following proposition:

\begin{prop}\label{prop2}
Given an open quantum system, whose charging process is described by a (possibly time-dependent) Hamiltonian $\hat{H}_s(t)$, and whose environment can be continuously monitored with efficiency $\eta$, the corresponding daemonic ergotropy 
\begin{align}
    \overline{\mathcal{E}}_{\sf unr,\eta}(t) = \sum_a p_{\sf traj} \mathcal{E}(\varrho_{\sf c}(t))
\end{align}
is bounded as
\begin{align}
\mathcal{E}(\varrho_{\sf unc}(t)) \leq \overline{\mathcal{E}}_{\sf unr,\eta}(t) \leq {E}(\varrho_{\sf unc}(t)) \,. \label{eq:daemonunravel}
\end{align}
The upper bound can be achieved in the presence of Markovian environment, whenever the system is initially prepared in a pure state and the monitoring is performed with unit efficiency $\eta=1$, independently of the kind of unravelling, i.e. 
\begin{align}
\overline{\mathcal{E}}_{\sf unr,\eta=1}(t) = {E}(\varrho_{\sf unc}(t)). \, .
\end{align}
\end{prop}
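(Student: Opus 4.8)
The plan is to prove the two bounds in \eqref{eq:daemonunravel} first, each by a one-line reduction to the ergotropy properties recalled above, and then to treat the saturation case separately.

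For the lower bound I would observe that, at any fixed time $t$, the family of trajectories $\{\varrho_{\sf c}(t)\}$ with probabilities $\{p_{\sf traj}\}$ is precisely a collection of conditional states produced by a generalized measurement on the environment, which here plays the role of the ancilla $A$; since $\varrho_{\sf unc}(t)=\sum_{\sf traj} p_{\sf traj}\,\varrho_{\sf c}(t)$, the convexity inequality \eqref{eq:daemonconvex} immediately gives $\mathcal{E}(\varrho_{\sf unc}(t)) \le \overline{\mathcal{E}}_{\sf unr,\eta}(t)$. For the upper bound I would apply property (i), $\mathcal{E}(\varrho)\le E(\varrho)$, trajectory by trajectory and then use linearity of $E(\varrho)=\Tr[\hat{H}_0\varrho]$: $\overline{\mathcal{E}}_{\sf unr,\eta}(t)=\sum_{\sf traj} p_{\sf traj}\,\mathcal{E}(\varrho_{\sf c}(t)) \le \sum_{\sf traj} p_{\sf traj}\,E(\varrho_{\sf c}(t)) = E(\varrho_{\sf unc}(t))$.

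For the saturation statement I would invoke the fact recalled just before the proposition: if the system starts in a pure state and $\eta=1$, the conditional dynamics is governed by a stochastic Schr\"odinger equation, so $\varrho_{\sf c}(t)=|\psi_{\sf c}(t)\rangle\langle\psi_{\sf c}(t)|$ stays pure along every trajectory and for every unravelling. The saturated form of property (i), $\mathcal{E}(|\psi\rangle\langle\psi|)=E(|\psi\rangle\langle\psi|)$, then turns the upper-bound chain above into a chain of equalities, yielding $\overline{\mathcal{E}}_{\sf unr,\eta=1}(t) = E(\varrho_{\sf unc}(t))$ regardless of whether the monitoring is PD, HoD or HeD. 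Equivalently, this is the continuous-monitoring incarnation of Proposition~\ref{prop1}: the system together with a purification of the monitored output up to time $t$ is in a global pure state, a unit-efficiency continuous measurement is a rank-one projective measurement on that purifying system, and Proposition~\ref{prop1} applies verbatim.

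The one step deserving genuine care, and which I would spell out, is the identification used twice above: that at fixed $t$ the continuous unravelling really is a (rank-one, when $\eta=1$) measurement on an ancilla correlated with the system, so that both the convexity argument and the reduction to Proposition~\ref{prop1} are legitimate. I would make this precise through the standard collisional/input-output picture of Markovian monitoring~\cite{WisemanMilburn,Jacobs2014a}, in which the environment is a stream of ancillas each interacting briefly with the system and then measured; in the same picture $\eta<1$ amounts to tracing out part of those ancillas, which is exactly what spoils the purity of $\varrho_{\sf c}(t)$ and hence the saturation, consistent with the explicit examples worked out later.
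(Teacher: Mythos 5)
Your proposal is correct and follows essentially the same route as the paper's proof: convexity of the ergotropy for the lower bound, $\mathcal{E}\le E$ plus linearity of the energy for the upper bound (the paper phrases this via Eq.~(\ref{eq:daemontwo}), which is the same fact), and purity preservation under unit-efficiency unravellings combined with Prop.~\ref{prop1} for saturation. Your additional care in justifying that a continuous unravelling at fixed $t$ is a genuine (rank-one, when $\eta=1$) measurement on an ancilla is a point the paper leaves implicit, but it does not change the argument.
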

\begin{proof}
The first inequality in Eq.~(\ref{eq:daemonunravel}) is a generalization of Eq. (\ref{eq:daemonconvex}), applied to (\ref{eq:convex}). Similarly the upper bound is a consequence of Eq. (\ref{eq:daemontwo}), while the fact that the upper bound can be achieved for unravellings of pure states follows straightforwardly from Prop.~\ref{prop1}.
\end{proof}
In the following, we show that as a corollary, the inequality above can be trivially extended to the case where ergotropies and energies are rescaled by the evolution time $t$, and thus in terms of figure of merits characterizing the charging power of the protocol.

\subsection{Average power of continuously-monitored quantum batteries}
\label{a:corollary}
Let us assume that the state of the quantum battery at the beginning of the charging process is initially in a pure state ($\varrho_0= |\psi_0\rangle\langle\psi_0|$) , and a charging protocol evolve the system from a time $t=0$ to $t=\tau$. We can use the definitions of energy ($E$), ergotropy ($\mathcal{E}$) and daemonic ergotropy ($\overline{\mathcal{E}}_{\{\hat{\Pi}_a^A\}}$) introduced in the manuscript, to define respectively the average power ($P$), the average ergotropic power ($\mathcal{P}$) and the average daemonic power ($\overline{\mathcal{P}}_{\{\hat{\Pi}_a^A\}}$) as follows:
\begin{align}
    P( \varrho(\tau))&=\frac{E(\varrho (\tau))-E(\varrho_0)}{\tau}\label{eq:pwr} \\
    \mathcal{P}( \varrho(\tau))&=\frac{\mathcal{E}(\varrho (\tau))-E(\varrho_0)}{\tau} \label{eq:erg_pwr}\\
    \overline{\mathcal{P}}_{\{\hat{\Pi}_a^A\}}(\tau) &= \frac{ \overline{\mathcal{E}}_{\{\hat{\Pi}_a^A\}}(\varrho(\tau)) - E(\varrho_0)}{\tau} 
    \label{eq:daem_pwr}
\end{align}
We can thus formulate the following corollary, that can be easily proven starting from Prop. 2 of the manuscript.

\begin{corollary}\label{corollary2}
Given an open quantum system, whose charging process is described by a (possibly time-dependent) Hamiltonian $\hat{H}_s(t)$, and whose environment can be continuously monitored with efficiency $\eta$, the corresponding average daemonic power $\overline{\mathcal{P}}_{\sf unr,\eta}(t) = \sum_a p_{\sf traj} \mathcal{P}(\varrho_{\sf c}(t))$ is bounded as
\begin{align}
\mathcal{P}(\varrho_{\sf unc}(t)) \leq \overline{\mathcal{P}}_{\sf unr,\eta}(t) \leq {P}(\varrho_{\sf unc}(t)) \,. \label{eq:daemonunravelpower}
\end{align}
The upper bound can be achieved in the presence of Markovian environment, whenever the system is initially prepared in a pure state and the monitoring is performed with unit efficiency $\eta=1$, independently of the kind of unravelling, i.e. 
\begin{align}
\overline{\mathcal{P}}_{\sf unr,\eta=1}(t) = {P}(\varrho_{\sf unc}(t)) \, .
\end{align}
\end{corollary}
 
We have thus proved that, as expected, the extractable work and power can be increased in an open quantum system if one obtains some information by monitoring the environment. Remarkably, the maximum daemonic ergotropy is equal to the unconditional energy and can be achieved via unit efficiency monitoring, independently of the measurement strategy. In the following we will rather investigate what happens in the more practical and experimentally relevant scenario of monitoring with non-unit efficiency, where a hierarchy between the different unravellings is established.
\begin{figure}[t]
	\includegraphics[width=0.5\textwidth]{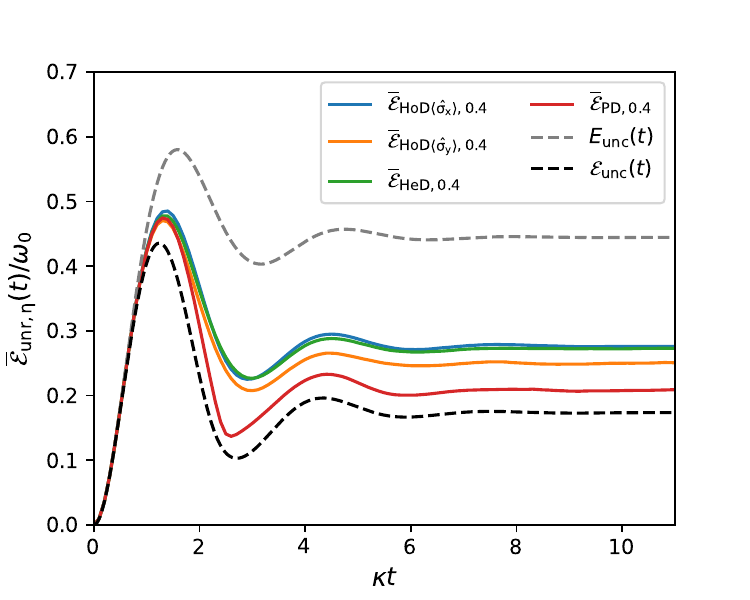}
	\caption{Daemonic ergotropies $\overline{\mathcal{E}}_{\sf unr,\eta}(t)$ as a function of time for different unravellings (HoD with photocurrents monitoring either $\hat{\sigma_x}$ or $\hat{\sigma}_y$, HeD and PD) with $\alpha/\kappa=1,\,\eta=0.4$ and averaged over $n=5 \times 10^4$ trajectories. Black and gray dashed lines correspond respectively to the ergotropy and energy of the unconditional state. The different daemonic ergotropies lie  as expected between these two lines, with HeD and $\hat{\sigma}_x$-HoD giving the larger values of extractable work. The upper and the lower bounds would be saturated for all the unravellings in the case of respectively perfect monitoring ($\eta=1$) and no-monitoring ($\eta=0$).}
	\label{f:time1}
\end{figure}

\section{A continuously-monitored open quantum battery}\label{s:tls} We now consider the paradigmatic example of an OQB, that is a two-level atom characterized by a Hamiltonian $\hat{H}_0 = (\omega_0/2) (\hat{\sigma}_z+1)$, driven by a resonant classical field of intensity $\alpha$, which acts as a charger, and spontaneously emitting with rate $\kappa$ (see Fig. \ref{f:OQB}). In interaction picture with respect to $\hat{H}_0$, the evolution of the system is described by a Markovian master equation of the form (\ref{eq:masterequation}), where the Hamiltonian ruling the evolution and the jump operator respectively read $\hat{H}_s = \alpha \hat{\sigma}_x$ and $\hat{c}=\sqrt{\kappa}\sigma_-$, i.e.
\begin{align}
    \frac{d\varrho_{\sf unc}(t)}{dt} 
    = -i \alpha [\hat{\sigma}_x,\varrho_{\sf unc}(t)] + \kappa \mathcal{D}[\hat{\sigma}_-]\varrho_{\sf unc}(t) \,.
    \label{eq:OQBme}
\end{align}

An analytical solution can be obtained for $\varrho_{\sf unc}(t)$ and we report here the corresponding steady-state values of energy and ergotropy
\begin{align}
    E_{\sf unc}^{\sf ss}/\omega_0=&\frac{4\alpha^2}{8\alpha^2+\kappa} \,, \label{eq:energyss}
     \\
     \mathcal{E}_{\sf unc}^{\sf ss}/\omega_0=& \frac{\kappa}{2} \left( \frac{\sqrt{16 \alpha^2 + \kappa^2 }-\kappa}{8\alpha^2+\kappa^2} \right),
    \label{eq:ergotropyss}
\end{align}
where we have exploited the following formula for the ergotropy of a qubit state
\begin{align}
    \mathcal{E}(\varrho)=E(\varrho)+\frac{\omega_0}{2}\sqrt{2\mu (\varrho)-1},
    \label{eq:qubitergotropy}
\end{align}
in  terms of energy and purity $\mu(\varrho) = \Tr[\varrho^2]$. One observes that the steady state energy $E_{\sf unc}^{\sf ss}$  grows monotonically as a function of $\alpha/\kappa$ and asymptotically reaches the value ${E}_{\sf unc,max}^{\sf ss} = \omega_0/2$ in the limit of large driving. Oppositely, the steady-state ergotropy $\mathcal{E}_{\sf unc}^{\sf ss}$ presents a maximum at 
\begin{align}
\alpha/\kappa =\sqrt{(1+\sqrt{2})/8} \, ,
\end{align} 
where it reaches its peak value 
\begin{align}
\mathcal{E}_{\sf unc,max}^{\sf ss}= \omega_0(\sqrt{2}-1)/2 \, .
\end{align}

If we now assume that a photo-counting detector is able to measure the spontaneously emitted photons with efficiency $\eta$,  the dynamics of the conditional states $\varrho_{\sf c}(t)$ is described by a SME of the form (\ref{eq:SMEPD}); in particular the statistics of the corresponding Poissonian increment is univocally identified by its average value
\begin{align}
\mathbbm{E}[dN_t] = \eta\kappa \langle \hat{\sigma}_+ \hat{\sigma}_-\rangle_t dt \, , 
\end{align}
and thus depends on the average value of $\hat{\sigma}_z$ (we remind that $\hat{\sigma}_+ \hat{\sigma}_-=(\hat{\sigma}_z + \hat{\mathbbm{1}}_2)/2$).

If one rather considers a homodyne detection on the emitted field, one has a SME of the form (\ref{eq:SMEHoD}), where in particular the continuous homodyne photo-current reads
\begin{align}
dy_t &= \sqrt{\eta \kappa} \langle \hat{\sigma}_- e^{i \phi} + \hat{\sigma}_+ e^{-i \phi} \rangle_t \,dt + dW_t \,,  \\
&= \sqrt{\eta \kappa} \langle \cos\phi \hat{\sigma}_x  + \sin\phi  \hat{\sigma}_y  \rangle_t \,dt + dW_t  \,,
\end{align}
where $\phi$ corresponds to the homodyne phase. In particular for $\phi=0$ and $\phi=\pi/2$ one has photocurrents depending respectively to the average values of $\hat{\sigma}_x$ and $\hat{\sigma}_y$.

Heterodyne detection on the environment leads to a similar SME, as can be indeed thought and implemented as a double homodyne, measuring orthogonal quadratures with half efficiency. As a consequence one gets a SME of the form (\ref{eq:SMEHeD}) where the two photocurrents depend on $\hat{\sigma}_x$ and $\hat{\sigma}_y$ as
\begin{align}
dy_t^{(1)} &= \sqrt{\frac{\eta \kappa}{2}} \langle \hat{\sigma}_x  \rangle_t \,dt + dW_t^{(1)} \,,  \\
dy_t^{(2)}  &= \sqrt{\frac{\eta \kappa}{2}} \langle   \hat{\sigma}_y  \rangle_t \,dt + dW_t^{(2)}  \,.
\end{align}

We will denote with $\overline{\mathcal{E}}_{\sf PD,\eta},\;\overline{\mathcal{E}}_{\sf HoD,\eta}$ and $\overline{\mathcal{E}}_{\sf HeD,\eta}$ the daemonic ergotropies corresponding to continuous monitoring of the fluorescence field due to the atomic spontaneous emission via respectively PD, HoD and HeD with efficiency $\eta$. 

According to Prop~\ref{prop2}, we know that for unravellings of pure states with $\eta=1$, one obtains $\overline{\mathcal{E}}_{\sf unr,\eta=1}(t) = E(\varrho_{\sf unc}(t))$, that is, the energy of the unconditional state can be fully extracted via conditional unitary operations for all the possible detection strategies. 

We first consider the situation where the system is initially prepared in the ground state $|0\rangle$, while the environment is not fully accessible and it is thus monitored with non-unit efficiency $\eta$. We have numerically solved the corresponding SMEs~\cite{johanssonQuTiPOpensourcePython2012,johanssonQuTiPPythonFramework2013} and evaluated the corresponding daemonic ergotropy by averaging over a large number of trajectories. In this case we find that there is a hierarchy between the different unravellings: as one can observe in Fig. \ref{f:time1}, for $\eta=0.4$ and $\alpha=\kappa$, homodyne detection with photocurrent proportional to $\langle \hat{\sigma}_x\rangle_t$ (corresponding to $\phi=0$) and heterodyne detection lead to the largest values of daemonic ergotropy, while photo-detection is the least performing strategy. The results presented in this plot are quite general: in all our numerical simulations we find that, as regards HoD, $\overline{\mathcal{E}}_{\sf HoD,\eta=1}(t)$ is maximized (minimized) for $\phi=0\;(\phi=\pi/2)$, that is for a $\hat{\sigma}_x$-dependent ($\hat{\sigma}_y$-dependent) photocurrent.
Furthermore, in the whole range of parameters we have explored, $\hat{\sigma}_x$-HoD and HeD generally yield very similar values of daemonic ergotropy. The same behavior can be indeed observed if we focus on the steady-state properties as shown in Fig.~\ref{f:alpha1}, where steady-state daemonic ergotropies $\overline{\mathcal{E}}_{\sf unr,\eta}^{\sf ss}$ for the different unravellings are plotted as a function of $\alpha$ and for two different values of $\eta$. From the figure one can clearly observe the hierarchy existing between the different strategies, and how, by increasing the monitoring efficiency, the daemonic ergotropy can reach values, approaching the unconditional energy $E_{\sf unc}^{\sf ss}$ in the limit of $\eta=1$. 
 \begin{figure}
	\centerline{\includegraphics[width=0.5\textwidth]{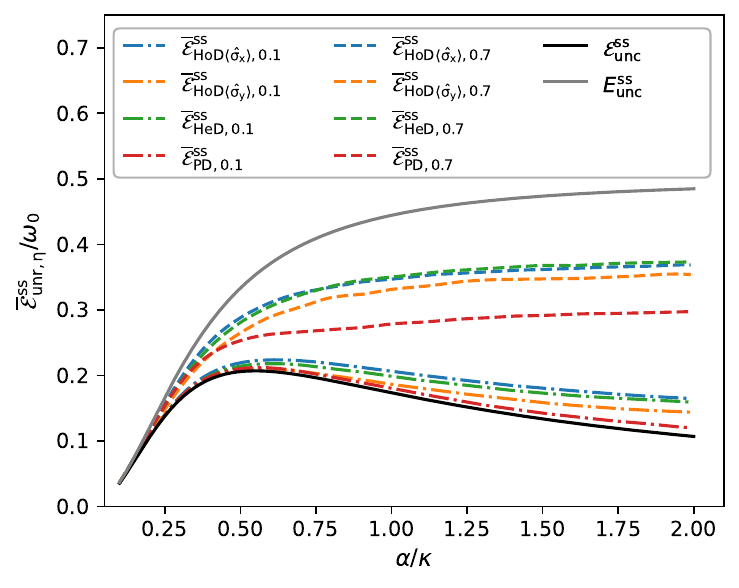}}
	\caption{Steady-state daemonic ergotropy $\overline{\mathcal{E}}_{\sf unr,\eta}^{\sf ss}$  for the different unravellings (HoD with photocurrents monitoring either $\hat{\sigma}_x$ or $\hat{\sigma}_y$, HeD, PD) as a function of $\alpha$ and for $\eta=0.1$ (dashed-dotted) and $\eta=0.7$ (dashed) (number of trajectories $n=5\times 10^4$). Also in this case HeD and $\hat{\sigma}_x$-HoD yield the largest values. Black and gray solid lines correspond respectively to ergotropy and energy of the unconditional steady-state, that is to the lower and upper bound for the daemonic ergotropies, that would be obtained for respectively no-monitoring ($\eta=0$) or perfect monitoring ($\eta=1$).\\}
    \label{f:alpha1}
 \end{figure}
\begin{figure}[t]
	\centerline{\includegraphics[width=0.5\textwidth]{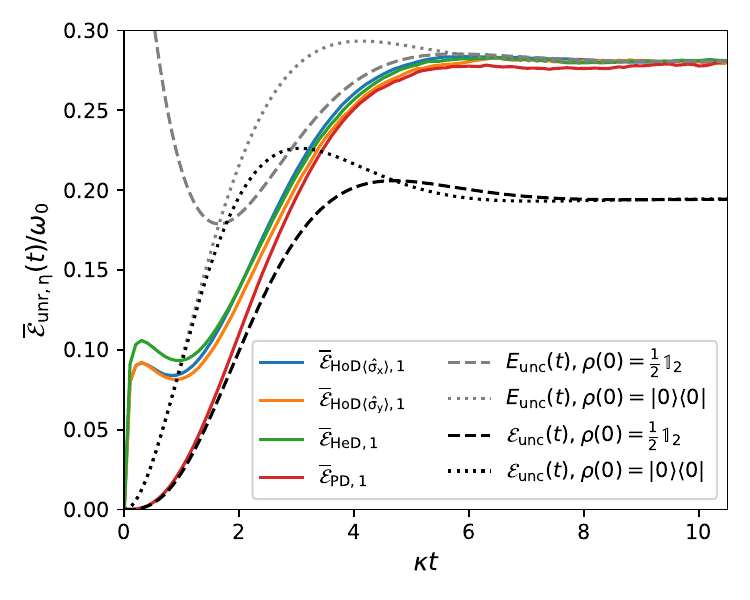}} 
	\caption{Daemonic ergotropy $\overline{\mathcal{E}}_{\sf unr,\eta}(t)$ as a function of time for different unravellings (HoD with photocurrents monitoring either $\hat{\sigma_x}$ or $\hat{\sigma}_y$, HeD, PD) with $\alpha/\kappa=0.4$, $\eta=1$ and by considering the maximally mixed state $\rho_0 = \hat{\mathbbm{1}}/2$ as initial state (number of trajectories $n=5 \times 10^4$). 
 Black and gray lines correspond respectively to ergotropy and energy of the unconditional state when initial states are the maximally mixed state (dashed) and the ground state (dotted).  } 
 
	\label{f:time2}
\end{figure}
Finally, in Fig.~\ref{f:time2} we consider the other scenario where different unravellings lead to different daemonic ergotropies, that is when the environment is monitored with unit efficiency, but the initial state is mixed. In particular, we consider the maximally mixed state $\varrho_0 = \hat{\mathbbm{1}}/2$ as initial state and we plot $\overline{\mathcal{E}}_{\sf unr,\eta}(t)$ as a function of time. We observe that at steady state all the unravellings lead to the same value of daemonic ergotropy,  equal to the corresponding unconditional energy: the monitoring will eventually purify all the trajectories and thus one falls back into the scenario described in Prop.~\ref{prop2}. However, different unravellings lead to a different purification speed, an effect that has been widely discussed in the literature~\cite{jacobsHowProjectQubits2003,combesRapidStateReduction2006,wisemanReconsideringRapidQubit2006,jordanQubitFeedbackControl2006,wisemanOptimalityFeedbackControl2008,ruskovQubitPurificationSpeedup2012}. With respect to the scenarios described in these works, we here have a non-trivial system Hamiltonian $\hat{H}_s$ ruling the dynamics, and a fixed jump operator $\hat{c}=\sqrt{\kappa}\hat{\sigma}_-$, representing the interaction between system and environment. Remarkably, at small times we find that the purification speed of HeD and, in the second instance, HoD (almost independently on the phase $\phi$) allows to achieve values of daemonic ergotropy evidently larger than the maximum daemonic ergotropy obtainable by starting from a ground state, whose upper bound (the unconditional energy represented as a dotted-gray line in Fig.~\ref{f:time2}) can be interpreted as the amount of energy injected into the system by the driving laser; we can thus conclude that in monitoring-enhanced battery charging protocols, the effects of purification may be more efficient than pure energy injection at small times. 
On the other hand we also observe that PD still yields the lowest values of daemonic ergotropy in the transient leading to steady-state, and for small times the enhancement due to the monitoring respect to the unconditional ergotropy is almost negligible.

\section{Conclusions} \label{s:conclusion} We have extended the concept of daemonic ergotropy to the open quantum system scenario, where some information leaking into the environment can be continuously monitored and exploited in order to enhance the work extraction protocol. Our findings reveal that the daemonic ergotropy not only surpasses the unconditional state ergotropy, but can even reach its energy in the ideal scenario of unit efficiency detectors. We have then discussed the simplest, but practically relevant, example of an OQB, that is a two-level atom classically driven by an external field acting as a charger. 

Our main results (Props.\ref{prop1},\ref{prop2}) require very few assumptions compared to the ones implied in the model of OQB considered. In particular, they hold even for charging protocols with time-dependent Hamiltonian~\cite{ZhangPowerfulHarmonicCharging2019,CrescenteChargingEnergyFluctuations2020}, allowing one to further assess the performances of these protocols in the case of continuous monitoring and to envisage more efficient quantum feedback protocols. Moreover, one can also consider trajectories describing continuously monitored open quantum systems in the presence of a non-Markovian environment; while the interpretation of unravellings of non-Markovian master equation in terms of monitoring is in general not guaranteed~\cite{DiosiNonMarkov2008,WisemanNonMarkov2008,PiiloNonMarkov2008,BarchielliNonMarkov2010,SmirneNonMarkov2020}, our approach can be directly pursued in the non-Markovian setting by describing such quantum conditional dynamics via continuously measured collisional models~\cite{ciccarelloQuantumCollisionModels2022,morroneChargingQuantumBattery2023,landiInformationalSteadyStates2022}.

A proof-of-principle experimental demonstration of our results can be readily pursued in a circuit-QED platform, where quantum trajectories corresponding to HeD of atom fluorescence have been recently observed~\cite{campagne-ibarcqObservingQuantumState2016,ficheuxDynamicsQubitSimultaneously2018}. In general our results pave the way to further investigation on the relationship between measurement energy cost and work extraction in continuously-monitored quantum systems~\cite{alickiThermodynamicsQuantumInformation2004,sagawaSecondLawThermodynamics2008,sagawaMinimalEnergyCost2009,jacobsSecondLawThermodynamics2009,vaccaroInformationErasureEnergy2011,jacobsQuantumMeasurementFirst2012,deffnerQuantumWorkThermodynamic2016,manzanoOptimalWorkExtraction2018,abdelkhalekFundamentalEnergyCost2018}, and on the design of a new generation of monitoring-enhanced and noise-resilient quantum batteries.

\begin{acknowledgments} The authors acknowledge helpful discussions with F. Albarelli, A. Serafini and A. Smirne. DM acknowledges financial support from MUR under
the “PON Ricerca e Innovazione 2014-2020”.
MACR acknowledges financial support from the Academy of Finland via the Centre of Excellence program (Project No. 336810).
MGG acknowledges support from UniMi via PSR-2 2021.
The authors wish to acknowledge CSC – IT Center for Science, Finland, for computational resources.
\end{acknowledgments}
\bibliography{reference}
\appendix
\onecolumngrid

%
\section{\label{a:monitoring} Continuously-monitored quantum systems}
In this appendix we briefly introduce the formalism behind continuously monitored quantum systems, presenting the stochastic master equations (SMEs) corresponding to continuous photo-detection, homodyne-detection and heterodyne-detection. \newline

We assume that our quantum system is interacting with a Markovian environment such that its unconditional evolution is described by a master equation in Lindblad form
\begin{align}
    \frac{d\varrho_{\sf unc}(t)}{dt} 
    = -i [\hat{H}_s,\varrho_{\sf unc}(t)] + \mathcal{D}[\hat{c}]\varrho_{\sf unc}(t) \,,  
    \label{eq:masterequationSM}
\end{align}
where $\hat{H}_s$ denotes the Hamiltonian ruling the evolution of the system and $\mathcal{D}[\hat{c}]\varrho= \hat{c}\varrho \hat{c} - (\hat{c}^\dag \hat{c} \varrho + \varrho\hat{c}^\dag \hat{c})/2$ is the Lindbladian superoperator. We remark that we will here consider a single jump operator $\hat{c}$ describing the interaction between the system and a zero-temperature environment, but the formalism can be straightforwardly generalized to multiple jump operators and to generic thermal environments. 

We will now assess the scenario where the environment is continuously monitored. Different measurement strategies lead to different possible evolutions of the corresponding conditional states $\varrho_{\sf c}(t)$, i.e. to different unravellings of the unconditional master equation (\ref{eq:masterequationSM}). We will here consider three different measurements: photo-detection (PD), homodyne-detection (HoD) and heterodyne-detection (HeD). 

In the case of PD the evolution is described by a stochastic master equation 
\begin{align}
d\varrho_{\sf c}(t) &= -i [\hat{H}_s,\varrho_{\sf c}(t)]\,dt + (1-\eta)\mathcal{D}[\hat{c}]\varrho_{\sf c}(t)\,dt \nonumber  \\
&\,\,\ - \frac{\eta}{2} (\hat{c}^\dag \hat{c} \varrho_{\sf c}(t)+ \varrho_{\sf c}(t)\hat{c}^\dag \hat{c}) + \eta \langle \hat{c}^\dag \hat{c}\rangle_t \varrho_c(t) \, dt \nonumber \\
&\,\,\, +\left( \frac{\hat{c}\varrho_{\sf c}(t)\hat{c}^\dag}{\langle \hat{c}^\dag \hat{c}\rangle_t} - \varrho_{\sf c}(t) \right) \,dN_t \,, \label{eq:SMEPD}
\end{align}
where $\langle \hat{A}\rangle_t = \Tr[\varrho_{\sf c}(t) \hat{A}]$, $\eta$ is the efficiency of the detector and $dN_t$ is a Poisson increment taking value $0$ (no-click event) or $1$ (detector click event), and having average value $\mathbbm{E}[dN_t] = \eta \langle \hat{c}^\dag \hat{c}\rangle_t dt$.

For HoD one has the following diffusive SME
\begin{align}
    d\varrho_{\sf c}(t) &= -i [\hat{H}_s,\varrho_{\sf c}(t)]\,dt + \mathcal{D}[\hat{c}]\varrho_{\sf c}(t)\,dt \nonumber \\
    &\,\,\, + \sqrt{\eta} \mathcal{H}[\hat{c}e^{i\phi}]\varrho_{\sf c} \,dW_t \, \label{eq:SMEHoD}
\end{align}
where $\mathcal{H}[\hat{c}]\varrho = \hat{c}\varrho + \varrho\hat{c}^\dag  -\langle \hat{c} + \hat{c}^\dag\rangle_t \varrho$, $\phi$ is the phase of the quadrature monitored via the homodyne, and $dW_t$ is a Wiener increment related to the measured photocurrent $dy_t = \sqrt{\eta}\langle \hat{c}e^{i\phi} + \hat{c}^\dag e^{-i\phi}\rangle_t \,dt  + dW_t$. The unravelling for HeD is a generalization of the HoD case, corresponding to a double-homodyne scheme leading to the SME
\begin{align}
    d\varrho_{\sf c}(t) &= -i [\hat{H}_s,\varrho_{\sf c}(t)]\,dt + \mathcal{D}[\hat{c}]\varrho_{\sf c}(t)\,dt \nonumber \\
    &\,\,\, + \sqrt{\eta/2}\, \mathcal{H}[\hat{c}]\varrho_{\sf c} \,dW_t^{(1)} \nonumber \\ 
    &\,\,\, + \sqrt{\eta/2}\, \mathcal{H}[i\hat{c}]\varrho_{\sf c} \,dW_t^{(2)}\, \label{eq:SMEHeD}
\end{align}
where $dW_t^{(1)}$ and $dW_t^{(2)}$ are uncorrelated Wiener increments corresponding to the two photocurrents $dy_t^{(1)}=\sqrt{\eta/2}\,\langle \hat{c} + \hat{c}^\dag \rangle_t \,dt  + dW_t^{(1)}$ and $dy_t^{(2)}=\sqrt{\eta/2}\,\langle i\hat{c} - i\hat{c}^\dag \rangle_t \,dt  + dW_t^{(2)}$.

\end{document}